\theoremstyle{plain}
\newtheorem{thm}{Theorem}
\newtheorem{lem}{Lemma}
\newtheorem{cor}{Corollary}
\newcommand{\xl}{\mathbf{x}}
\newcommand{\ul}{\mathbf{u}}
\newcommand{\yl}{\mathbf{y}}
\newcommand{\el}{\mathbf{e}}
\newcommand{\dl}{\mathbf{d}}
\newcommand{\G}[2]{\mathcal{G}_{#1}^{#2}}
\newcommand{\uG}[2]{{}^\mathrm{u}\mathcal{G}_{#1}^{#2}}
\newcommand{\dG}[2]{{}^\mathrm{d}\mathcal{G}_{#1}^{#2}}
\newcommand{\eG}[2]{{}^\mathrm{e}\mathcal{G}_{#1}^{#2}}
\newcommand{\blkH}[4]{\mathcal{H}_{#1,#2,#3}(#4)}
\newcommand{\cvec}[3]{{#1}_{[#2,\;#3]}}
\newcommand{\xix}{\xi_{\mathrm{x}}}
\newcommand{\xid}{\xi_{\mathrm{d}}}
\newcommand{\nxd}{{\bar{n}}}
\newcommand{\nd}{{n_\mathrm{d}}}
\newcommand{\nx}{{n_\mathrm{x}}}
\newcommand{\xbar}{\bar{x}}
\newcommand{\Al}{\mathcal{A}_\mathrm{L}}
\newcommand{\Bl}{\mathcal{B}_\mathrm{L}}
\newcommand{\thmref}[1]{Theorem~\ref{#1}}
\newcommand{\lemref}[1]{Lemma~\ref{#1}}
\newcommand{\corref}[1]{Corollary~\ref{#1}}
\title{\LARGE \bf
Data-enabled Predictive Repetitive Control
}
\author{Rogier Dinkla$^{1}$, Tom Oomen$^{1,2}$, Sebastiaan P. Mulders$^{1}$ and Jan-Willem van Wingerden$^{1}$
\thanks{$^{1}$Delft University of Technology, Delft Center for Systems and Control, Mekelweg 2, 2628 CD Delft, The Netherlands.
        {\tt\small \{R.T.O.Dinkla, T.A.E.Oomen, S.P.Mulders, J.W.vanWingerden\}@tudelft.nl}.}
\thanks{$^{2}$Eindhoven University of Technology, Control Systems Technology Group, 5600 MB Eindhoven, The Netherlands.}
}
\begin{document}

\maketitle
\thispagestyle{empty}
\pagestyle{empty}

\begin{abstract}
Many systems are subject to periodic disturbances and exhibit repetitive behaviour. Model-based repetitive control employs knowledge of such periodicity to attenuate periodic disturbances and has seen a wide range of successful industrial implementations. The aim of this paper is to develop a data-driven repetitive control method. In the developed framework, linear periodically time-varying (LPTV) behaviour is lifted to linear time-invariant (LTI) behaviour. Periodic disturbance mitigation is enabled by developing an extension of Willems' fundamental lemma for systems with exogenous disturbances. The resulting Data-enabled Predictive Repetitive Control (DeePRC) technique accounts for periodic system behaviour to perform attenuation of a periodic disturbance. Simulations demonstrate the ability of DeePRC to effectively mitigate periodic disturbances in the presence of noise.
\end{abstract}
\begin{acronym}%
    \acro{LTI}{linear time-invariant}
    \acro{LPTV}{linear periodically time-varying}
    \acro{SPRC}{Subspace Predictive Repetitive Control}
    \acro{DeePC}{Data-enabled Predictive Control}
    \acro{DeePRC}{Data-enabled Predictive Repetitive Control}
    \acro{VRFT}{Virtual Reference Feedback Tuning}
    \acro{IFT}{Iterative Feedback Tuning}
    \acro{CbT}{Correlation-based Tuning}
    \acro{MFAC}{Model-free Adaptive Control}
    \acro{ILC}{Iterative Learning Control}
    \acro{RC}{Repetitive Control}
    \acro{FRF}{Frequency Response Function}
    \acro{MIMO}{multiple-input multiple-output}
    \acro{CL-DeePC}{Closed-loop Data-enabled Predictive Control}
\end{acronym}%
\section{INTRODUCTION}
Periodic disturbances and repetitive behaviour are encountered in many systems such as wind turbines~\cite{vanVondelen2023} and semiconductor manufacturing~\cite{vanZundert2017}. The periodic nature of disturbances is exploited by repetitive control to obtain improved reference tracking performance compared to conventional feedback control. Model-based repetitive control uses the internal model principle~\cite{Francis1976} to model a periodic disturbance by means of a memory loop, thereby facilitating complete attenuation of errors that share the same periodicity as the disturbance~\cite{Hara1988}.

The pursuit of fast and accurate repetitive controllers has prompted many model-based forms of repetitive control. Modelling has allowed the design of potentially non-causal filters to enhance robustness and learning~\cite{deRozario2019}. In the frequency domain, \ac{FRF} data enables uncertainty modelling for robust controller designs~\cite{DeRoover2000}. Repetitive control of \ac{MIMO} systems has been facilitated by means of, for example,  $\mathcal{H}_\infty$ techniques~\cite{Amann1996}. Unfortunately, repetitive control applications often rely on parametric models that are costly and hard to obtain due to the complexity that arises with, e.g., underdamped mechanical systems~\cite{Gevers2005}.

The combination of widespread availability of data and increasing system complexity motivates direct data-driven repetitive control designs. Unlike model-based control, direct data-driven techniques do not rely on an intermediate synthesis of a parametric model~\cite{Markovsky2022}, thereby alleviating the need for the aforementioned costly modelling of systems with complex dynamics~\cite{Hou2013}. Recently, the development of the direct data-driven predictive control technique \ac{DeePC}~\cite{Coulson2019} has garnered significant interest. \ac{DeePC} applies Willems' fundamental lemma~\cite{Willems2005} in a receding horizon framework. An important feature of \ac{DeePC} is its innate ability to handle constraints due to its reliance on optimization to solve an optimal control problem.

Despite attracting considerable attention, typical \ac{DeePC} approaches do not account for periodic system behaviour and cannot incorporate periodic exogenous disturbance generators. \ac{DeePC} has recently been incorporated in \ac{ILC}~\cite{Chu2023,Zhang2024}, in which field periodic disturbances feature prominently, but by the nature of \ac{ILC} these applications are limited to cases where the system's state resets periodically. In addition, to mitigate a periodic disturbance, standard \ac{DeePC} based on Willems' fundamental lemma is insufficient because it assumes system controllability. Whilst \ac{DeePC} has been extended to linear parameter-varying~\cite{Verhoek2021} and \ac{LPTV}~\cite{Li2022} systems, periodic disturbance attenuation for repetitive control applications is not considered.

Although \ac{DeePC} has seen considerable development, its use to attenuate periodic disturbances and accommodate periodic dynamics is not adequately addressed. This paper's main contribution is the development of a \ac{DeePC}-inspired repetitive control framework named \ac{DeePRC} that attenuates the influence of periodic disturbances and accommodates \ac{LPTV} system behaviour. Building on~\cite{Li2022} a technique called `lifting' is used to transform \ac{LPTV} to \ac{LTI} dynamics. A suitable relaxation of the system controllability assumed by Willems' fundamental lemma is developed to faciliate the mitigation of periodic disturbances. Furthermore, by incorporating \ac{CL-DeePC}~\cite{Dinkla2024} in a lifted framework, \ac{DeePRC} relies on a computationally efficient implementation that can adequately mitigate noise, including during closed-loop operation.

This paper is organized as follows. Section~\ref{sec:prelim} introduces the employed \ac{LPTV} model, notation and definitions. The \ac{DeePRC} framework is developed in Section~\ref{sec:DeePRC}. To this end, the \ac{LPTV} system is first lifted to an \ac{LTI} representation in Section~\ref{sec:lifting}. Then, in Section~\ref{sec:IMP}, the internal model principle is used to motivate augmenting the lifted state with a constant disturbance. Section~\ref{sec:DistWFL} explains how the controllability assumption of Willems' fundamental lemma may be relaxed to accomodate such disturbances in a \ac{DeePC} framework, which motivates the \ac{DeePRC} formulation presented in Section~\ref{sec:DeePRC_formulation}. Thereafter, a simulation case study is presented in Section~\ref{sec:Simulation}, and conclusions and suggestions for future work are provided in Section~\ref{sec:Conclusion}.
\section{PRELIMINARIES}\label{sec:prelim}
\subsection{Periodic System Model}
This paper considers the signal generating plant to be a discrete-time \ac{LPTV} system $\mathcal{S}$ in innovation form to capture the effects of process and measurement noise~\cite{Verhaegen2007a}
\begin{subequations}\label{eq:SS_inno}
\begin{empheq}[left=\mathcal{S}:\empheqlbrace]{align}
    x_{k+1} &= A_k x_k + B_k u_k + F_k d_k + K_k e_k\label{eq:SSinno_x}\\
	y_k &= C_k x_k + D_k u_k + G_k d_k + e_k \label{eq:SSinno_y}
  \end{empheq}
\end{subequations}
where the subscript $k\in\mathbb{Z}$ is used as a discrete time index, $\{A_k,B_k,F_k,K_k,C_k,D_k,G_k\}$ are periodic system matrices, ${x_k\in\mathbb{R}^n}$ represents the system's states, ${u_k\in\mathbb{R}^r}$ are inputs, ${d_k\in\mathbb{R}^m}$ are periodic disturbances, ${y_k\in\mathbb{R}^l}$ are outputs, and $e_k\in\mathbb{R}^l$ is zero mean white innovation noise. The disturbances $d_k$ and the system $\mathcal{S}$ are assumed to be $P$-periodic. For disturbances this entails that ${d_k=d_{k+P}}$ whilst for the system $\mathcal{S}$ the definition below applies.\\
\textbf{Definition 1.\label{def:P-periodic} ($\mathbf{P}$-periodic \ac{LPTV} system)~\cite{Bamieh1991}:} a causal system $\mathcal{S}$ is said to be $P$-periodic if it commutes with the delay operator $\mathscr{D}_P$ such that $\mathscr{D}_P \mathcal{S}=\mathcal{S} \mathscr{D}_P$, where $(\mathscr{D}_P f)(k):=f(k-P)$ for a function of time $f$.

In effect, for the state space representation of \eqref{eq:SS_inno}, $P$-periodicity entails that for all of the system matrices ${A_k=A_{k+P}}$, ${B_k=B_{k+P}}$, etc.

\subsection{Notation and Definitions}
This section introduces notation and definitions that are used throughout this paper. For discrete time indices ${k_1,k_2\in\mathbb{Z}}$ with $k_2\geq k_1$ we start by defining a monodromy matrix $\Phi_{k_1}^{k_2}$ and Markov parameters $\mathcal{G}_{k_1}^{k_2}$:
\begin{align}
\Phi_{k_1}^{k_2} &:=
\begin{cases}
        A_{k_2-1}A_{k_2-2}\cdots A_{k_1}, &k_2>k_1 \\
        I_n, &k_2=k_1,
\end{cases}\label{eq:monodromy}\\
\G{k_1}{k_2}(\mathfrak{B}_k,\mathfrak{D}_k) &:= 
\begin{cases}
        \mathfrak{D}_{k_1}, &\mskip47mu k_2=k_1 \\
        C_{k_2}\Phi_{k_1+1}^{k_2}\mathfrak{B}_{k_1}, &\mskip47mu k_2>k_1,
\end{cases}\label{eq:PeriodicMarkov}
\end{align}
where $I_n\in\mathbb{R}^{n\times n}$ represents an identity matrix, and $\mathfrak{D}_k$ and $\mathfrak{B}_k$ represent (periodic) matrices of the system as in \eqref{eq:SS_inno}. Specific types of Markov parameters are defined by
\begin{align}
    \begin{split}\uG{k_1}{k_2}&:=\G{k_1}{k_2}(B_k,D_k)\quad  
    \dG{k_1}{k_2}:=\G{k_1}{k_2}(F_k,G_k) \\ 
    \eG{k_1}{k_2}&:=\G{k_1}{k_2}(K_k,I_l)
    \end{split}
\end{align}
where $I_l$ is not a periodic matrix but an identity matrix.
Using \eqref{eq:monodromy} and \eqref{eq:PeriodicMarkov}, furthermore define matrices with Markov parameters $\mathcal{T}_{k_1}^{k_2}(\mathcal{G})$, reversed extended controllability matrices $\mathscr{C}_{k_1}^{k_2}(\mathfrak{B}_k)$, and an extended observability matrix $\mathcal{O}_{k_1}^{k_2}$ as
\begin{align}
    \!\!\mathcal{T}_{k_1}^{k_2}(\mathcal{G}) &:= \!
    \begin{bmatrix}
		\mathcal{G}_{k_1}^{k_1}   & 0  & \cdots & 0\\
		\mathcal{G}_{k_1}^{k_1+1} & \mathcal{G}_{k_1+1}^{k_1+1} & \cdots & 0\\
		\vdots & \vdots & \ddots & \vdots\\
		\mathcal{G}_{k_1}^{k_2} & \mathcal{G}_{k_1+1}^{k_2} & \cdots  & \mathcal{G}_{k_2}^{k_2}
	\end{bmatrix},\label{eq:Periodic_FIR}\\
    \!\!\mathscr{C}_{k_1}^{k_2}(\mathfrak{B}_k) &:= \!\left[
        \Phi_{k_1+1}^{k_2+1}\mathfrak{B}_{k_1}\; \Phi_{k_1+2}^{k_2+1}\mathfrak{B}_{k_1+1}\, \cdots\, \Phi_{k_2+1}^{k_2+1}\mathfrak{B}_{k_2}
    \right]\!,\label{eq:Periodic_Rev_Ctrb}\\
    \!\!\mathcal{O}_{k_1}^{k_2} &:= \!\left[
        C_{k_1}\Phi_{k_1}^{k_1};\; C_{k_1+1}\Phi_{k_1}^{k_1+1};\; \cdots;\; C_{k_2}\Phi_{k_1}^{k_2}
    \right]\label{eq:Periodic_Obsv}.
\end{align}

Block-Hankel data matrices are defined as
\begin{align}
    \blkH{i}{s}{N}{u_k}
    = \begin{bmatrix}
    u_i & u_{i+1} & \cdots & u_{i+N-1}\\
    u_{i+1} & u_{i+2} & \cdots & u_{i+N} \\
    \vdots & \vdots & \iddots & \vdots \\
    u_{i+s-1} & u_{i+s} & \cdots & u_{i+N+s-2}
    \end{bmatrix},
\end{align}
where $u_k$ can be replaced with different types of data, $i\in\mathbb{Z}$ indicates the start of the used sequence, and $N,s\in\mathbb{Z}$ respectively indicates the number of columns and block rows of the matrix. The notion of persistency of excitation is defined using block-Hankel matrices as follows.\\
\textbf{Definiton 2. (Persistency of excitation)~\cite{Willems2005}:} The signal given by the sequence $\{w_k\}_{k=i}^{i+N+s-2}$ is called persistently exciting of order $s$ if its block-Hankel matrix $\blkH{i}{s}{N}{w_k}$ is full row rank.

Furthermore, column vectors of concatenated data samples are exemplified by
\begin{align}
    \cvec{u}{k_1}{k_2} := \begin{bmatrix}u_{k_1}^\top & u_{k_1+1}^\top & \cdots & u_{k_2}^\top\end{bmatrix}^\top, \quad k_2\geq k_1.
\end{align}
\section{\acl{DeePRC}}\label{sec:DeePRC}
This section presents the development of \acf{DeePRC}. The periodic system model is first transformed to an \ac{LTI} system by means of lifting. To mitigate periodic disturbances, it is shown that the controllability assumption that underpins Willems' fundamental lemma may be relaxed, thereby facilitating data-driven use of the internal model principle.

\subsection{Lifting the \ac{LPTV} to an \ac{LTI} system}\label{sec:lifting}
Lifting is a technique that is used to transform \ac{LPTV} systems into a higher-dimensional \ac{LTI} representation. This section lifts the $P$-periodic system from \eqref{eq:SS_inno} along the lines of~\cite{Li2022}, from which we obtain the following definition.\\
\textbf{Definition 3.\label{def:lifting} (Lifting):} For a $P$-periodic \ac{LPTV} system $\mathcal{S}$ as in \eqref{eq:SS_inno} the corresponding \ac{LTI}, lifted system representation $\mathcal{S}_\mathsf{L}(k_0)$ of $\mathcal{S}$ with initial time $k_0\in\mathbb{Z}$ is
\begin{subequations}\label{eq:lSS_inno}
\begin{empheq}[left=\mathcal{S}_\mathsf{L}(k_0):\empheqlbrace]{align}
    \xl_{j+1} &= A \xl_j + B \ul_j + F \dl_j + K \el_j\label{eq:lSSinno_x}\\
	\yl_j &= C \xl_j + D \ul_j + G \dl_j + H \el_j \label{eq:lSSinno_y}
  \end{empheq}
\end{subequations}
with iteration index $j\in\mathbb{Z}$, and state $\xl_j\in\mathbb{R}^n$, inputs $\ul_j\in\mathbb{R}^{rP}$, outputs $\yl_j\in\mathbb{R}^{lP}$, disturbance $\dl_j\in\mathbb{R}^{mP}$, and innovation noise $\el_j\in\mathbb{R}^{lP}$. The following relations exist between these quantities in the $P$-periodic \ac{LPTV} system $\mathcal{S}$ and the lifted \ac{LTI} system $\mathcal{S}_\mathsf{L}(k_0)$
\begin{align*}
\xl_j := x_{k_0+jP},
\qquad
\ul_j := \cvec{u}{k_0+jP}{k_0+(j+1)P-1}
\end{align*}
with $\yl_j$, $\dl_j$, $\el_j$ defined akin to $\ul_j$. For the lifted system matrices $\{A,B,F,K,C,D,G,H\}$ in \eqref{eq:lSS_inno} note the lack of a subscript to distinguish them from \ac{LPTV} counterparts. These system matrices are defined using \eqref{eq:monodromy} to \eqref{eq:Periodic_Obsv} as follows:
\begin{align*}
    A&:=\Phi_{k_0}^{k_0+P} &
    C&:=\mathcal{O}_{k_0}^{k_0+P-1} \\
    B&:=\mathscr{C}_{k_0}^{k_0+P-1}(B_k) & 
    D&:=\mathcal{T}_{k_0}^{k_0+P-1}(\uG{}{}) \\
    F&:=\mathscr{C}_{k_0}^{k_0+P-1}(F_k) &
    F&:=\mathcal{T}_{k_0}^{k_0+P-1}(\dG{}{}) \\
    K&:=\mathscr{C}_{k_0}^{k_0+P-1}(K_k) &  
    H&:=\mathcal{T}_{k_0}^{k_0+P-1}(\eG{}{}).
\end{align*}

\subsection{Applying the Internal Model Principle}\label{sec:IMP}
The internal model principle is the main mechanism behind repetitive control to mitigate disturbances. The essence of the internal model principle is that the effect of a disturbance that is generated by some signal generating model may be asymptotically attenuated by means of feedback if the controller includes the dynamics of the disturbance generating model~
\cite{Francis1976}.

In the case of a $P$-periodic disturbance $d_k=d_{k+P}$, as is the case here, this implies that the lifted disturbance $\dl_j$ is constant: $\dl_{j+1}=\dl_j$ $\forall j\in\mathbb{Z}$. Combining this disturbance model with the lifted system dynamics of \eqref{eq:lSS_inno} obtains the augmented system description
\begin{subequations}\label{eq:ldSS_inno}
\begin{empheq}[left=\mkern-8mu\mathcal{S}_\mathsf{L}^\mathrm{d}(k_0)\!:\mkern-1mu\!\empheqlbrace\!]{align}
    \begin{bmatrix}\xl_{j+1}\\ \dl_{j+1}\end{bmatrix}\!&=\!
    \left[\begin{array}{@{}c@{}c@{}}A\;\; & F \\ 0  & I_{mP}\!    \end{array}\right]
    \!\!\begin{bmatrix}\xl_j\\ \dl_j\end{bmatrix}\!+
    \!\begin{bmatrix}B\\0\end{bmatrix}\!\ul_j\!+
    \!\begin{bmatrix} K\\ 0\end{bmatrix}\!\el_j\label{eq:ldSSinno_x}\\
	\yl_j &= \begin{bmatrix} C & G \end{bmatrix}\;\!\begin{bmatrix}\xl_j\\ \dl_j\end{bmatrix} + D \ul_j + H \el_j. \label{eq:ldSSinno_y}
\end{empheq}
\end{subequations}

In data-driven control applications, minimality of the controlled system is often assumed, requiring that the system is both controllable and observable. Whilst the augmented lifted system \eqref{eq:ldSS_inno} may be observable, the modes corresponding to the dynamics of the disturbance are not controllable. The next section derives an extension of Willems' fundamental lemma for such uncontrollable systems in a (not necessarily lifted) \ac{LTI} domain that is subsequently applied in Section~\ref{sec:DeePRC_formulation} for the development of a data-driven repetitive control method that operates on lifted data from a system like~\eqref{eq:ldSS_inno}.

\subsection{Disturbances with Willems' Fundamental Lemma}\label{sec:DistWFL}
\ac{DeePC} applies Willems' fundamental lemma to make data-driven predictions in a receding horizon optimal control framework. For the purpose of the following discussion of Willems' fundamental lemma, consider a generic deterministic state-space \ac{LTI} model $\mathcal{P}$
\begin{subequations}\label{eq:SS_LTI}
\begin{empheq}[left=\mathcal{P}:\empheqlbrace]{align}
    x_{k+1} &= \mathcal{A} x_k + \mathcal{B} u_k\\
	y_k &=\mathcal{C} x_k + \mathcal{D} u_k,
  \end{empheq}
\end{subequations}
with states $x_k\in\mathbb{R}^{\nx}$, inputs $u_k\in\mathbb{R}^{n_\mathrm{u}}$, outputs ${y_k\in\mathbb{R}^{n_\mathrm{y}}}$ and system matrices $\{\mathcal{A},\mathcal{B},\mathcal{C},\mathcal{D}\}$. For the system $\mathcal{P}$ consider the following lemma.
\begin{lem}[\textbf{Willems' fundamental lemma~\cite[Th. 1]{Willems2005}}]\label{lem:WFL}
 Consider the deterministic \ac{LTI} system $\mathcal{P}$ from \eqref{eq:SS_LTI} and assume it to be controllable\footnote{Note that \cite[Th. 1]{Willems2005} employs a behavioural definition of controllability (see, e.g.,~\cite{Markovsky2021}) that is implied by classical state controllability. This latter notion of state controllability is used in~\cite{vanWaarde2020} to demonstrate the fundamental lemma for state-space representations, as well as in this work.}. Collecting input $u^\mathrm{m}_k$ and output measurements $y^\mathrm{m}_k$ during an experiment, if the input signal $\{u_k^\mathrm{m}\}_{k=0}^{N+L+\nx-2}$ is persistently exciting of order $L+\nx$, then any $L$-long input-output trajectory of $\mathcal{P}$ is described by
\begin{align}
    \begin{bmatrix}\cvec{u}{0}{L-1}\\\cvec{y}{0}{L-1}\end{bmatrix} = \begin{bmatrix}
        \blkH{0}{L}{N+\nx}{u^\mathrm{m}_k}\\
        \blkH{0}{L}{N+\nx}{y^\mathrm{m}_k}
    \end{bmatrix}g,\label{eq:WFL}
\end{align}
with $g\in\mathbb{R}^{N+\nx}$.
\end{lem}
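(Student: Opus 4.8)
The plan is to prove \eqref{eq:WFL} by showing that the column span of the stacked data matrix equals the space $\mathcal{W}_L$ of \emph{all} length-$L$ input--output trajectories of $\mathcal{P}$, and to obtain this from a dimension count. First I would record the easy inclusion: by the definition of the block-Hankel matrices, every column of $\left[\blkH{0}{L}{N+\nx}{u^\mathrm{m}_k};\,\blkH{0}{L}{N+\nx}{y^\mathrm{m}_k}\right]$ is an $L$-long window of the recorded trajectory and is therefore itself an admissible trajectory of $\mathcal{P}$. Hence the right-hand side of \eqref{eq:WFL} is an element of $\mathcal{W}_L$ for \emph{every} $g$, so the column span is contained in $\mathcal{W}_L$. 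Since $\mathcal{W}_L$ is a finite-dimensional linear space, it then suffices to prove that the two spaces have equal dimension, i.e.\ that the rank of the stacked Hankel matrix equals $\dim\mathcal{W}_L$; the representation for an arbitrary trajectory then follows by surjectivity.

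Next I would make the state explicit. Collecting the window-initial states into $\mathcal{X}:=[x_0\;x_1\;\cdots\;x_{N+\nx-1}]$ and combining the output map with the state recursion of \eqref{eq:SS_LTI} yields the factorization
\[
\begin{bmatrix}
\blkH{0}{L}{N+\nx}{u^\mathrm{m}_k}\\[2pt]
\blkH{0}{L}{N+\nx}{y^\mathrm{m}_k}
\end{bmatrix}
=
\begin{bmatrix} I & 0 \\ \mathcal{T}_L & \mathcal{O}_L \end{bmatrix}
\begin{bmatrix}
\blkH{0}{L}{N+\nx}{u^\mathrm{m}_k}\\[2pt]
\mathcal{X}
\end{bmatrix},
\]
where $\mathcal{O}_L=[\mathcal{C};\,\mathcal{C}\mathcal{A};\,\cdots;\,\mathcal{C}\mathcal{A}^{L-1}]$ is the observability matrix and $\mathcal{T}_L$ the lower-triangular Toeplitz matrix of Markov parameters. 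The left factor has rank $n_\mathrm{u}L+\operatorname{rank}\mathcal{O}_L$, which is precisely $\dim\mathcal{W}_L$ (equal to $n_\mathrm{u}L+\nx$ when the realization is minimal). Therefore, \emph{provided} the right factor $[\blkH{0}{L}{N+\nx}{u^\mathrm{m}_k};\,\mathcal{X}]$ has full row rank $n_\mathrm{u}L+\nx$, it is surjective, the product inherits the rank of the left factor, and the required rank equality follows at once.

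The whole proof thus reduces to the key claim that the input--state matrix $[\blkH{0}{L}{N+\nx}{u^\mathrm{m}_k};\,\mathcal{X}]$ has full row rank $n_\mathrm{u}L+\nx$, and this is where I expect the main obstacle and where controllability is indispensable. Persistency of excitation of order $L+\nx$ already makes $\blkH{0}{L}{N+\nx}{u^\mathrm{m}_k}$ full row rank $n_\mathrm{u}L$, so only the $\nx$ rows of $\mathcal{X}$ are in question, and I would argue by contradiction. A nontrivial left-annihilator $[\eta^\top\;\xi^\top]$ must have $\xi\neq0$, since $\xi=0$ would force $\eta$ to annihilate the full-rank input block; reading the relation column by column gives $\xi^\top x_i=-\eta^\top\cvec{u}{i}{i+L-1}$ for every window $i$. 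Unrolling the state recursion and using Cayley--Hamilton to keep the input window to length $L+\nx$ then converts this into a \emph{fixed, nonzero} linear dependence among the rows of the order-$(L+\nx)$ input Hankel matrix $\blkH{0}{L+\nx}{N}{u^\mathrm{m}_k}$, contradicting persistency of excitation of order $L+\nx$. The delicate point is certifying that the constructed combination is genuinely nonzero: this is exactly the role of controllability of $(\mathcal{A},\mathcal{B})$, because for an uncontrollable pair the offending state functional can be absorbed by an uncontrollable mode without inducing any dependence in the input data, and the rank condition---hence the lemma---indeed fails in that case.
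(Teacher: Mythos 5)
Your proposal is correct and takes essentially the same route as the paper: the paper likewise reduces \eqref{eq:WFL} to full row rank of the state--input data matrix \eqref{eq:WFL_rank} (cited from \cite[Cor.~2]{Willems2005}), solves for $g$ matching $(x_0,\cvec{u}{0}{L-1})$ via Rouch\'e--Capelli as in \eqref{eq:WFL3}, and then propagates to the outputs via \cite[Lem.~2]{DePersis2020}, which is exactly your factorization through $\bigl[\begin{smallmatrix} I & 0 \\ \mathcal{T}_L & \mathcal{O}_L \end{smallmatrix}\bigr]$ and the surjectivity of the full-row-rank right factor. The only difference is that you prove in-line the two ingredients the paper cites; in particular, your Cayley--Hamilton/persistency-of-excitation contradiction for the key rank claim is the same mechanism the paper itself deploys in the proof of \thmref{thm:FullRowRank}, so the arguments coincide in substance.
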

A proof of \lemref{lem:WFL} relies on the fact that if $\mathcal{P}$ is controllable and if $u_k^\mathrm{m}$ is persistently exciting of order $L+\nx$ then \cite[Cor. 2]{Willems2005}
\begin{align}\label{eq:WFL_rank}
    \text{rank}\left(\begin{bmatrix}
        \blkH{0}{L}{N+\nx}{x^\mathrm{m}_k}\\
        \blkH{0}{L}{N+\nx}{u^\mathrm{m}_k}
    \end{bmatrix}\right)=Ln_\mathrm{u}+\nx,
\end{align}
i.e. the matrix in~\eqref{eq:WFL_rank} is is full row rank. It then follows from the Rouch\'e-Capelli theorem, that there exists a vector $g$ as above such that
\begin{align}
    \begin{bmatrix}x_0\\ \cvec{u}{0}{L-1}\end{bmatrix} = \begin{bmatrix}
        \blkH{0}{1}{N+\nx}{x^\mathrm{m}_k}\\
        \blkH{0}{L}{N+\nx}{u^\mathrm{m}_k}
    \end{bmatrix}g,\label{eq:WFL3}
\end{align}
with $x_0$ as initial state corresponding to the input-output trajectory on the left-hand side of \eqref{eq:WFL}. As is furthermore shown in~\cite[Lem. 2]{DePersis2020},~\eqref{eq:WFL} follows directly from~\eqref{eq:WFL3}.

Unlike is assumed in Willems' fundamental lemma, the lifted augmented system $\mathcal{S}_\mathsf{L}^\mathrm{d}(k_0)$ from \eqref{eq:ldSS_inno} is not controllable due to the uncontrollable disturbance modes. This raises the question of what conditions are sufficient to guarantee that \eqref{eq:WFL3}, and therefore \eqref{eq:WFL} on which \ac{DeePC} relies, still hold in the presence of a periodic disturbance.

\subsubsection{Imposing full row rank on the state-input data matrix}
Given the developments in the preceding section, a natural answer is to seek guarantees for the equivalent of~\eqref{eq:WFL_rank} using an extended state to incorporate an uncontrollable disturbance. To this end, $\mathcal{P}$ is augmented with an observable disturbance as
\begin{subequations}\label{eq:SS_LTI_dist}
\begin{empheq}[left=\mathcal{P}_\mathsf{d}:\empheqlbrace]{align}
    \xbar_{k+1} &= \underbrace{\begin{bmatrix}\mathcal{A} & \mathcal{B}_\mathrm{d}\\ 0 & \mathcal{A}_\mathrm{d}\end{bmatrix}}_{:=\mathcal{A}_\mathrm{L}} \xbar_k + \underbrace{\begin{bmatrix}\mathcal{B}\\0\end{bmatrix}}_{:=\mathcal{B}_\mathrm{L}} u_k\label{eq:SSx_LTI_dist} \\
y_k &=\begin{bmatrix}\mathcal{C} & \mathcal{C}_\mathrm{d} \end{bmatrix} \xbar_k + \mathcal{D} u_k,\label{eq:SSy_LTI_dist}
  \end{empheq}
\end{subequations}
where $\xbar_k=\left[x_k^\top\; d_k^\top\right]^\top\in\mathbb{R}^{\nxd}$ is the augmented state containing the disturbance $d_k\in\mathbb{R}^{\nd}$, with $\nxd=\nx+\nd$, and $\{\mathcal{B}_\mathrm{d},\mathcal{A}_\mathrm{d},\mathcal{C}_\mathrm{d}\}$ are extra system matrices with respect to $\mathcal{P}$ that indicate effects of the disturbance. 

What follows is a theorem that specifies sufficient conditions to guarantee that, analagous to~\eqref{eq:WFL_rank},
\begin{align}
    \text{rank}\left(\begin{bmatrix}
        \blkH{0}{1}{N+\nxd}{\xbar^\mathrm{m}_k}\\\blkH{0}{L}{N+\nxd}{u^\mathrm{m}_k}
    \end{bmatrix}\right)=Ln_\mathrm{u}+\nxd,\label{eq:Theorem1}
\end{align}
such that~\eqref{eq:WFL3} and therefore~\eqref{eq:WFL} also hold for the augmented system $\mathcal{P}_\mathsf{d}$.
\begin{thm}\label{thm:FullRowRank} Consider the deterministic \ac{LTI} system $\mathcal{P}_\mathsf{d}$ from \eqref{eq:SS_LTI_dist}. Let the superscript $\mathrm{m}$ denote data obtained from an experiment. If a measured input signal $\{u_k^\mathrm{m}\}_{k=0}^{N+L+\nxd-2}$ is persistently exciting of order $L+\nxd$, and $(\mathcal{A},\mathcal{B})$ and $(\mathcal{A}_\mathrm{d},d^\mathrm{m}_0)$ are controllable, then the resulting augmented state-input trajectories are such that~\eqref{eq:Theorem1} holds.\end{thm}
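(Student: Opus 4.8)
The plan is to prove \eqref{eq:Theorem1} by showing that the stacked matrix has a trivial left kernel. Write a generic left-kernel vector as $[\alpha^\top\ \beta_0^\top\ \cdots\ \beta_{L-1}^\top]$, with $\alpha=[\alpha_x^\top\ \alpha_d^\top]^\top\in\mathbb{R}^{\nxd}$ partitioned conformably with $\xbar_k=[x_k^\top\ d_k^\top]^\top$, so $\alpha_x\in\mathbb{R}^{\nx}$, $\alpha_d\in\mathbb{R}^{\nd}$, and $\beta_i\in\mathbb{R}^{n_\mathrm{u}}$. The left-kernel condition is
\begin{align}
\alpha_x^\top x_j^\mathrm{m}+\alpha_d^\top d_j^\mathrm{m}+\sum_{i=0}^{L-1}\beta_i^\top u_{j+i}^\mathrm{m}=0,\qquad j=0,\dots,N+\nxd-1, \label{eq:planann}
\end{align}
and the goal is to force $\alpha_x=0$, $\alpha_d=0$ and $\beta_0=\cdots=\beta_{L-1}=0$. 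Since $\xbar^\mathrm{m}_k$ is a trajectory of the deterministic system \eqref{eq:SS_LTI_dist}, I would use $\xbar_{j+p}^\mathrm{m}=\Al^{p}\xbar_j^\mathrm{m}+\sum_{t=0}^{p-1}\Al^{p-1-t}\Bl u_{j+t}^\mathrm{m}$ to rewrite the shifted instances of \eqref{eq:planann}.

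Next, let $\chi(z)=\sum_{p=0}^{\nxd}c_pz^p$ with $c_{\nxd}=1$ be the characteristic polynomial of $\Al$, and form the combination $\sum_{p=0}^{\nxd}c_p\cdot(\text{instance of \eqref{eq:planann} at index }j+p)$. The state contribution is $\big(\sum_p c_p(\Al^\top)^p\alpha\big)^\top\xbar_j^\mathrm{m}=\big(\chi(\Al^\top)\alpha\big)^\top\xbar_j^\mathrm{m}=0$ by Cayley--Hamilton, so the state is eliminated and a pure input relation $\sum_{\tau=0}^{L+\nxd-1}\gamma_\tau^\top u_{j+\tau}^\mathrm{m}=0$ remains for $j=0,\dots,N-1$; equivalently $[\gamma_0^\top\ \cdots\ \gamma_{L+\nxd-1}^\top]$ lies in the left kernel of $\blkH{0}{L+\nxd}{N}{u^\mathrm{m}}$. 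Persistency of excitation of order $L+\nxd$ (Definition~2) makes this Hankel matrix full row rank, hence every $\gamma_\tau=0$. Reading the coefficients lag by lag, the highest lags $\tau=\nxd,\dots,\nxd+L-1$ contain only the $\beta_i$ and give, by backward substitution starting from $c_{\nxd}=1$, that $\beta_{L-1}=\cdots=\beta_0=0$; the remaining lags $\tau=0,\dots,\nxd-1$ then yield $\Bl^\top(\Al^\top)^q\alpha=0$ for $q=0,\dots,\nxd-1$, i.e. $\alpha^\top[\,\Bl\ \Al\Bl\ \cdots\ \Al^{\nxd-1}\Bl\,]=0$.

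The main obstacle is precisely that, unlike in \lemref{lem:WFL}, the pair $(\Al,\Bl)$ is \emph{not} controllable, so this single Cayley--Hamilton/persistency-of-excitation elimination cannot force $\alpha=0$ on its own. Here the block-triangular structure of \eqref{eq:SS_LTI_dist} is essential: because $\Al^q\Bl=[\,(\mathcal{A}^q\mathcal{B})^\top\ 0\,]^\top$ and $(\mathcal{A},\mathcal{B})$ is controllable, the reachable subspace is $\mathrm{Im}[\,\Bl\ \Al\Bl\ \cdots\,]=\mathbb{R}^{\nx}\times\{0\}$. Consequently $\alpha^\top[\,\Bl\ \cdots\ \Al^{\nxd-1}\Bl\,]=0$ only annihilates the controllable component, forcing $\alpha_x=0$ while leaving the uncontrollable disturbance component $\alpha_d$ undetermined.

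To eliminate $\alpha_d$ I would return to the original relation \eqref{eq:planann}, which with $\alpha_x=0$ and all $\beta_i=0$ collapses to $\alpha_d^\top d_j^\mathrm{m}=0$ for every $j$. Since the disturbance evolves autonomously, $d_j^\mathrm{m}=\mathcal{A}_\mathrm{d}^{\,j}d_0^\mathrm{m}$, this reads $\alpha_d^\top[\,d_0^\mathrm{m}\ \mathcal{A}_\mathrm{d}d_0^\mathrm{m}\ \cdots\ \mathcal{A}_\mathrm{d}^{\nd-1}d_0^\mathrm{m}\,]=0$, whose bracketed matrix is exactly the controllability matrix of $(\mathcal{A}_\mathrm{d},d_0^\mathrm{m})$. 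The second hypothesis, controllability of $(\mathcal{A}_\mathrm{d},d_0^\mathrm{m})$, makes it full rank and forces $\alpha_d=0$. Thus the left kernel is trivial, \eqref{eq:Theorem1} holds, and therefore \eqref{eq:WFL3} and \eqref{eq:WFL} carry over to $\mathcal{P}_\mathsf{d}$ as claimed. The delicate point throughout is the clean separation of the two controllability assumptions: $(\mathcal{A},\mathcal{B})$ governs the input/controllable-state rows through persistency of excitation, whereas $(\mathcal{A}_\mathrm{d},d_0^\mathrm{m})$ must be brought in afterwards, via the autonomous disturbance trajectory, to handle the mode that excitation of the input alone can never reach.
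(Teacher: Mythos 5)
Your proof is correct, and it follows the same skeleton as the paper's proof of \thmref{thm:FullRowRank} (itself patterned on~\cite[Th.~1.i]{vanWaarde2020}): take a left-kernel vector, propagate the state through the system recursion, eliminate it via Cayley--Hamilton, invoke persistency of excitation of order $L+\nxd$ on the deep input block-Hankel matrix, use the block-triangular structure of $\Al,\Bl$ together with controllability of $(\mathcal{A},\mathcal{B})$ to force the state component to zero (your reachable-subspace observation is exactly the paper's step~\eqref{eq:xix_Ctrb}, valid since $\nxd-1\geq\nx-1$), and finally use $d_j^\mathrm{m}=\mathcal{A}_\mathrm{d}^{\,j}d_0^\mathrm{m}$ with controllability of $(\mathcal{A}_\mathrm{d},d_0^\mathrm{m})$ to kill the disturbance component, which is literally the paper's step~\eqref{eq:xid_Ctrb}. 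The one genuine difference is organizational, in how the input coefficients are dispatched: the paper first proves $\eta=0$ by a dimension count --- the $\nxd+1$ vectors $w_0,\dots,w_{\nxd}$ all lie in the left kernel of the tall matrix in~\eqref{eq:DeepHankel}, whose dimension is at most $\nxd$ by persistency of excitation, so they are linearly dependent --- and only then forms the Cayley--Hamilton combination with $\eta=0$ substituted; you instead form a single characteristic-polynomial combination from the outset and read everything off the lag structure: the coefficients at lags $\tau\in[\nxd,\nxd+L-1]$ involve only the $\beta_i$ and form a triangular system with unit leading coefficient, giving $\beta_{L-1}=\cdots=\beta_0=0$, after which the lags $\tau\in[0,\nxd-1]$ yield $\alpha^\top\Al^{q}\Bl=0$ for $q=0,\dots,\nxd-1$ by backward substitution. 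This one-shot bookkeeping buys two things: it replaces the paper's rather terse iterative linear-dependence argument for $\eta=0$ with an explicit computation, and your monic normalization $c_{\nxd}=1$ of the annihilating polynomial is always available, whereas the paper's stated normalization $\alpha_0=1$ presupposes a nonzero constant coefficient (i.e.\ $\Al$ nonsingular); what the backward substitution actually requires is a nonzero \emph{leading} coefficient, which the monic choice guarantees. Both arguments use the two controllability hypotheses in identical roles, so I would classify yours as a streamlined variant of the paper's proof rather than a different method.
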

A proof of \thmref{thm:FullRowRank} follows along the same lines as~\cite[Th. 1.i]{vanWaarde2020}, which facilitates its interpretation.
\begin{proof}
 Consider a row vector $\left[\xi \;\; \eta\right]$ in the left kernel of the matrix in $\eqref{eq:Theorem1}$ with ${\xi\top\in\mathbb{R}^{\nxd}}$, and ${\eta\top\in\mathbb{R}^{n_\mathrm{u}L}}$. To prove the theorem it will be shown that 
\begin{align}\label{eq:OriginalHankel}
    \begin{bmatrix}
        \xi & \eta
    \end{bmatrix}
    \begin{bmatrix}
        \blkH{0}{1}{N+\nxd}{\xbar^\mathrm{m}_k}\\\blkH{0}{L}{N+\nxd}{u^\mathrm{m}_k}
    \end{bmatrix} = 0
\end{align} implies $\begin{bmatrix}
        \xi & \eta
    \end{bmatrix}=0$.

To start, consider that akin to \eqref{eq:OriginalHankel}, for the `tall' matrix
\begin{align}
    \begin{bmatrix}
        \xi & \eta & 0_{\nxd n_\mathrm{u}}
    \end{bmatrix}\begin{bmatrix}
        \blkH{0}{1}{N}{\xbar^\mathrm{m}_k}\\\blkH{0}{L+\nxd}{N}{u^\mathrm{m}_k}
    \end{bmatrix}=0,\label{eq:DeepHankel}
\end{align}
where $0_{\nxd n_\mathrm{u}}^\top\in\mathbb{R}^{{\nxd n_\mathrm{u}}}$. Furthermore, it is implied by~\eqref{eq:OriginalHankel} that
\begin{align}\label{eq:HankelImplied}
    \begin{bmatrix}
        \xi & \eta
    \end{bmatrix}
    \begin{bmatrix}
        \blkH{q}{1}{N}{\xbar^\mathrm{m}_k}\\
        \blkH{q}{L}{N}{u^\mathrm{m}_k}
    \end{bmatrix} &= 0,\quad \forall q\in[1,\nxd] \cap \mathbb{Z}.
\end{align}

Iterative application of~\eqref{eq:SSx_LTI_dist} yields
\begin{align}\label{eq:iterated_x}
    \xbar^\mathrm{m}_q = \Al^q \xbar^\mathrm{m}_0 + \underbrace{\begin{bmatrix}\Al^{q-1} \Bl & \Al^{q-2} \Bl & \cdots & \Bl\end{bmatrix}}_{:=\mathcal{K}_q} u^\mathrm{m}_{[0,q-1]},
\end{align}
where $\mathcal{K}_q\in\mathbb{R}^{\nxd \times qn_\mathrm{u}}$ is an extended reversed controllability matrix that is defined as shown. As a consequence of~\eqref{eq:iterated_x},
\begin{align}\label{eq:HankelIterator}
    \blkH{q}{1}{N}{\xbar^\mathrm{m}_k} = \begin{bmatrix}\Al^q & \mathcal{K}_q\end{bmatrix}
    \begin{bmatrix}
        \blkH{0}{1}{N}{\xbar^\mathrm{m}_k}\\
        \blkH{0}{q}{N}{u^\mathrm{m}_k}
    \end{bmatrix}.
\end{align}
Substituting~\eqref{eq:HankelIterator} in~\eqref{eq:HankelImplied} and adding zero columns to the row vector on the left obtains $\forall q \in [1,\nxd]$
\begin{align}\label{eq:leftKernelHankel}
    \begin{bmatrix}
        \xi\Al^q & \xi \mathcal{K}_q & \eta & 0_{(\nxd-q) n_\mathrm{u}}
    \end{bmatrix}
    \begin{bmatrix}
        \blkH{0}{1}{N}{\xbar^\mathrm{m}_k}\\
        \blkH{0}{L+\nxd}{N}{u^\mathrm{m}_k}
    \end{bmatrix}=0,
\end{align}
Together,~\eqref{eq:DeepHankel} and~\eqref{eq:leftKernelHankel} $\forall q\in[1,\nxd]$ indicate $\nxd+1$ row vectors that are in the left kernel of the tall matrix that features in both equations. These row vectors are
\begin{align*}
    w_0 &:= \begin{bmatrix}
        \xi & \eta & 0_{\nxd n_\mathrm{u}}
    \end{bmatrix}\\
    w_1 &:= \begin{bmatrix}
        \xi\Al & \xi\Bl & \eta & 0_{(\nxd-1) n_\mathrm{u}}
    \end{bmatrix}\\
    w_2 &:= \begin{bmatrix}
        \xi\Al^{2} & \xi\Al\Bl & \xi\Bl & \eta & 0_{(\nxd-2) n_\mathrm{u}}
    \end{bmatrix}\\
        \vdots&\phantom{=} \\
    w_\nxd &:= \begin{bmatrix}
        \xi\Al^{\nxd} & \xi\Al^{\nxd-1}\Bl & \cdots & \xi\Bl & \eta
    \end{bmatrix}.
\end{align*}
In addition, these $\nxd+1$ row vectors are linearly dependent because by the assumed persistency of excitation of the input of order $L+\nxd$ the dimensionality of the left kernel of the tall matrix in \eqref{eq:DeepHankel} is at most $\nxd$. This allows us to write $w_\nxd$ as a linear combination of the other vectors, from which it can iteratively be shown that all of the elements of $\eta$ are zero such that $\eta=0$.

Next, a specific linear combination of the vectors is considered. The linear combination is chosen in accordance with the Cayley-Hamilton theorem by which $\sum_{i=0}^\mathrm{\nxd}\alpha_i\Al^i=0$, with $\alpha_0=1$. Defining the new vector in the kernel of the tall matrix $v:=\sum_{i=0}^\mathrm{\nxd}\alpha_iw_i$ and substituting $\eta=0$, one finds
\begin{align*}
    v=\begin{bmatrix}
        0_\nxd & v_1 & v_2 & \cdots & v_\nxd & 0_{n_\mathrm{u}L} 
    \end{bmatrix}
\end{align*}
with $v_c=\sum_{i=c}^\mathrm{\nxd}\alpha_i\Al^{i-c}\Bl$. The nonzero part of $v$ is in the left kernel of $\blkH{0}{\nxd}{N}{u_k}$, which by the assumed persistency of excitation has a dimensionality of zero. This implies that $v_c=0$ $\forall c\in[1,2,\cdots,\nxd]$. Starting from $v_\nxd=0$ and going to $v_1=0$, by backward substitution $\xi\left[\Al^{\nxd-1}\Bl\;\;\Al^{\nxd-2}\Bl\;\;\cdots\;\;\Bl\right]=0$. Introducing the partitioning $\xi=[\xix\;\;\xid]$ with $\xix^\top\in\mathbb{R}^{\nx}$, $\xid^\top\in\mathbb{R}^{\nd}$, and expanding the matrices $\Al$ and $\Bl$ obtains
\begin{align}
    \xix
    \begin{bmatrix}
        \mathcal{A}^{\nxd-1}\mathcal{B} & \mathcal{A}^{\nxd-2}\mathcal{B} & \cdots & \mathcal{B}
    \end{bmatrix}=0.\label{eq:xix_Ctrb}
\end{align}
By the assumed controllability of $(\mathcal{A},\mathcal{B})$ the reversed extended controllability matrix in \eqref{eq:xix_Ctrb} has full row rank, which implies that $\xix=0$.

Lastly, substituting $\xix=0$ and $\eta=0$ into \eqref{eq:OriginalHankel} shows that $\xid \blkH{0}{1}{N+\nxd}{d_k^\mathrm{m}}=0$. With reference to \eqref{eq:SSx_LTI_dist} this may be rewritten as
\begin{align}
    \xid \begin{bmatrix}
        d_0^\mathrm{m} & \mathcal{A}_\mathrm{d} d_0^\mathrm{m}& \cdots & \mathcal{A}_\mathrm{d}^{N+\nxd-1}d_0^\mathrm{m}
    \end{bmatrix}=0.\label{eq:xid_Ctrb}
\end{align}
Similarly as before, the controllability of $(\mathcal{A}_\mathrm{d},d^\mathrm{m}_0)$ ensures that \eqref{eq:xid_Ctrb} implies that $\xid=0$.
\end{proof}

The above \thmref{thm:FullRowRank} provides sufficient conditions for the rank condition stipulated by~\eqref{eq:Theorem1} to hold. As explained in the beginning of this section, this rank condition is an important building block of typical \ac{DeePC} implementations. Unfortunately, the conditions posed by \thmref{thm:FullRowRank} are rather restrictive. To see this, compare $\mathcal{P}_\mathsf{d}$ from \eqref{eq:SS_LTI_dist} to the non-stochastic component of $\mathcal{S}_\mathsf{L}^\mathrm{d}(k_0)$ from \eqref{eq:ldSS_inno}. In the lifted framework the periodic disturbance is constant such that $\Al=I_{mP}$. With reference to \thmref{thm:FullRowRank}, this implies that the lifted disturbance $\dl_j\in\mathbb{R}^{mP}$ must be full row rank, which is highly unlikely given that this would require that $m=1$ and that $P=1$. Particularly this latter condition would obviate the need for lifting in the first place since the considered \ac{LPTV} system is then actually \ac{LTI}.

\subsubsection{Relaxing the assumption of controllability}
Since the controllability of $(\mathcal{A}_\mathrm{d},d^\mathrm{m}_0)$ seems like a difficult condition to satisfy for many kinds of exogenous disturbance generators, it makes sense to look into whether this condition, and consequently~\eqref{eq:Theorem1}, may be relaxed. As it turns out, the conditions presented by \thmref{thm:FullRowRank} are sufficient but not necessary to ensure an equivalent of \eqref{eq:WFL3} upon which \ac{DeePC} ultimately relies. The following theorem formalizes this insight.
\begin{thm}[\textbf{Fundamental lemma for systems with an exogenous disturbance}]\label{thm:WFLv2}
For the \ac{LTI} system $\mathcal{P}_\mathsf{d}$ from \eqref{eq:SS_LTI_dist}, if the pair $(\mathcal{A},\mathcal{B})$ is controllable, the controllability matrix of $(\mathcal{A}_\mathrm{d},d^\mathrm{m}_0)$ has rank $\nu$ with $1\leq\nu<\nd$, and a measured input signal $\{u_k^\mathrm{m}\}_{k=0}^{N+L+\nxd-2}$ is persistently exciting of order $L+\nxd$, then the following three properties hold:
\begin{enumerate}
    \item[(i)] the matrix from~\eqref{eq:Theorem1} is $\nd-\nu$ rank deficient:
    \begin{align}\label{eq:WFLv2i}
        \textnormal{rank}\left(\begin{bmatrix}
        \blkH{0}{1}{N+\nxd}{\xbar^\mathrm{m}_k}\\\blkH{0}{L}{N+\nxd}{u^\mathrm{m}_k}
    \end{bmatrix}\right)=Ln_\mathrm{u}+\nx+\nu,
    \end{align}
    \item[(ii)] $\exists g$ such that
    \begin{align}\label{eq:WFLv2ii}
        \begin{bmatrix}
            \blkH{0}{1}{N+\nxd}{\xbar^\mathrm{m}_k}\\
            \blkH{0}{L}{N+\nxd}{u^\mathrm{m}_k}
        \end{bmatrix}g=\begin{bmatrix}\xbar_0\\ \cvec{u}{0}{L-1}\end{bmatrix},
    \end{align}
    \item[(iii)] $\exists g$ such that for measured inputs $u_k^\mathrm{m}$ and outputs $y_k^\mathrm{m}$,
    \begin{align}\label{eq:WFLv2iii}
        \begin{bmatrix}
        \blkH{0}{L}{N+\nxd}{u^\mathrm{m}_k}\\
        \blkH{0}{L}{N+\nxd}{y^\mathrm{m}_k}
    \end{bmatrix}g=
    \begin{bmatrix}\cvec{u}{0}{L-1}\\
    \cvec{y}{0}{L-1}\end{bmatrix}.
    \end{align}
\end{enumerate}\end{thm}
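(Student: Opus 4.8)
The plan is to establish the three claims in sequence, with part (i) carrying essentially all of the technical weight and parts (ii)--(iii) following from it by standard arguments. For part (i) I would rerun the left-kernel argument from the proof of \thmref{thm:FullRowRank} essentially verbatim up to the conclusions $\eta=0$ and $\xix=0$. Neither of these steps uses controllability of the disturbance generator: the vanishing of $\eta$ follows purely from persistency of excitation of order $L+\nxd$ (which bounds the left-kernel dimension of the tall input Hankel matrix), and the vanishing of $\xix$ follows from the Cayley--Hamilton linear combination together with controllability of $(\mathcal{A},\mathcal{B})$, since expanding $\Al^{c}\Bl=\begin{bmatrix}\mathcal{A}^{c}\mathcal{B} & 0\end{bmatrix}^\top$ makes $\xid$ drop out of \eqref{eq:xix_Ctrb} entirely. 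The sole place the earlier proof invoked controllability of $(\mathcal{A}_\mathrm{d},d_0^\mathrm{m})$ was the very last step, where it concluded $\xid=0$ from \eqref{eq:xid_Ctrb}.

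This last step is where the two proofs diverge. Under the relaxed hypothesis the matrix in \eqref{eq:xid_Ctrb}---the controllability matrix of $(\mathcal{A}_\mathrm{d},d_0^\mathrm{m})$, which by Cayley--Hamilton has the same column space, and hence the same left kernel, whether taken with $\nd$ or $N+\nxd$ columns---has rank $\nu$, so its left kernel has dimension $\nd-\nu$. Thus $\xid$ is no longer forced to vanish; it may be any vector in this $(\nd-\nu)$-dimensional left kernel, while $\xix$ and $\eta$ remain zero. This puts the left kernel of the data matrix in \eqref{eq:WFLv2i} in bijection with the left kernel of the disturbance controllability matrix, so it has dimension exactly $\nd-\nu$. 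Since the data matrix has $\nxd+Ln_\mathrm{u}$ rows, the rank-nullity theorem yields rank $(\nxd+Ln_\mathrm{u})-(\nd-\nu)=Ln_\mathrm{u}+\nx+\nu$, which is \eqref{eq:WFLv2i}.

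For part (ii) I would appeal to the Rouch\'e--Capelli theorem: a solution $g$ to \eqref{eq:WFLv2ii} exists iff the right-hand side lies in the column space of the data matrix, equivalently iff it is orthogonal to every vector in the left kernel characterized above. Since that left kernel consists of row vectors $\begin{bmatrix}0 & \xid & 0\end{bmatrix}$ with $\xid$ annihilating the columns of the disturbance controllability matrix, the orthogonality condition reduces to $\xid d_0=0$ for all such $\xid$---that is, to $d_0$ lying in the $\nu$-dimensional reachable subspace $\mathrm{span}\{d_0^\mathrm{m},\mathcal{A}_\mathrm{d}d_0^\mathrm{m},\dots\}$. Because the target $\xbar_0$ is the initial condition of a genuine trajectory of $\mathcal{P}_\mathsf{d}$ driven by the same autonomous disturbance generator $d_{k+1}=\mathcal{A}_\mathrm{d}d_k$ seeded by $d_0^\mathrm{m}$, its disturbance component $d_0$ automatically lies in this reachable subspace; the orthogonality condition holds and $g$ exists. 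Part (iii) then follows from (ii) exactly as \eqref{eq:WFL} was obtained from \eqref{eq:WFL3} via \cite[Lem.~2]{DePersis2020}: the same $g$ that reproduces the initial augmented state and the input string reproduces the output string, since the outputs of the deterministic system are fixed by the initial state and the inputs through \eqref{eq:SSy_LTI_dist}.

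The step I expect to be the crux, and the one deserving the most care, is the argument in part (ii) that the disturbance component of any admissible target state lies in the reachable subspace of $(\mathcal{A}_\mathrm{d},d_0^\mathrm{m})$. This is precisely what renders the rank deficiency of part (i) harmless for prediction: the column space of the data matrix loses only the directions corresponding to disturbance initial conditions that the physical generator can never realize, so no realizable trajectory is excluded. Making this rigorous requires being explicit that the class of target trajectories is restricted to those sharing the disturbance dynamics used to generate the data---which in the repetitive-control setting is immediate, since the periodic disturbance, and hence its constant lifted counterpart, is identical across the identification experiment and online operation.
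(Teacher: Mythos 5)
Your proof is correct, and on parts (i) and (iii) it coincides with the paper's: part (i) reruns the left-kernel argument of \thmref{thm:FullRowRank} and observes that controllability of $(\mathcal{A}_\mathrm{d},d_0^\mathrm{m})$ entered only at the final step; in fact your explicit two-way identification of the data matrix's left kernel with the left kernel of the disturbance controllability matrix (the reverse inclusion resting on $d_k^\mathrm{m}=\mathcal{A}_\mathrm{d}^k d_0^\mathrm{m}$ and Cayley--Hamilton, so that the $N+\nxd$-column and $\nd$-column matrices share a column space) is spelled out more carefully than the paper's terse remark that the rank deficiency of $\blkH{0}{1}{N+\nxd}{d_k^\mathrm{m}}$ transfers to the matrix in~\eqref{eq:WFLv2i}, and part (iii) is delegated to \cite[Lem.~2]{DePersis2020} in both. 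Part (ii) is where you take a genuinely different, dual route: the paper argues via an explicit image representation, writing a kernel representation of the augmented dynamics whose extra rows $\Xi_\mathrm{d}$ span the left kernel of the disturbance controllability matrix, exhibiting a right-nullspace basis in which the disturbance coordinate is confined to the column space of $\mathsf{C}_\mathrm{d}$ (a basis of the reachable subspace of $(\mathcal{A}_\mathrm{d},d_0^\mathrm{m})$), and then using statement (i) to conclude that the data matrix's columns span that set; you instead apply Rouch\'e--Capelli together with the left-kernel characterization from (i), reducing solvability of~\eqref{eq:WFLv2ii} to $\xid d_0=0$ for every annihilator $\xid$, i.e., to $d_0$ lying in the reachable subspace. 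The two arguments are linear-algebraically equivalent---the column space of the paper's stacked matrix in its image representation is precisely the orthogonal complement of your left kernel---but yours is shorter and, importantly, makes explicit what the paper leaves implicit in the statement of the theorem: property (ii) can only hold for target trajectories whose disturbance component is realizable by the generator seeded with $d_0^\mathrm{m}$, a restriction the paper encodes silently through $\mathsf{C}_\mathrm{d}$ in its nullspace basis. Your observation that this is automatic in the repetitive-control setting (and trivially so under \corref{cor:WFLdconst}, where the identification and online disturbances coincide) correctly identifies the crux and is a genuine gain in precision over the paper's presentation.
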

\begin{proof}
Proof of $(i)$ follows along the lines of the proof of \thmref{thm:FullRowRank}: $\xix = 0$, $\eta=0$, but since the controllability matrix of $(\mathcal{A}_\mathrm{d},d^\mathrm{m}_0)$ is $\nd-\nu$ rank deficient, so are $\blkH{0}{1}{N+\nxd}{d_k^\mathrm{m}}$ and the matrix in~\eqref{eq:WFLv2i}.

For the proof of $(ii)$, consider the following kernel representation of~\eqref{eq:SSx_LTI_dist}
\begin{align}\label{eq:LeftNull}
    \underbrace{\begin{bmatrix}
        -I_\nx & 0 & \mathcal{A} & \mathcal{B}_\mathrm{d} & \mathcal{B} & 0\\
        0 & -I_\nd & 0 & \mathcal{A}_\mathrm{d} & 0 & 0\\
        0 & 0 & 0 & \Xi_\mathrm{d} & 0 & 0\\
    \end{bmatrix}}
    \begin{bmatrix}
        x_{k+1}\\ d_{k+1} \\ x_k \\ d_k \\ \cvec{u}{k}{k+L-1}
    \end{bmatrix} = 0,
\end{align}
where the rows of ${\Xi_\mathrm{d}\in\mathbb{R}^{(\nd-\nu)\times\nd}}$ form a minimal basis that spans the left kernel of the controllability matrix of $(\mathcal{A}_\mathrm{d},d_0^\mathrm{m})$. 
By inspection of~\eqref{eq:LeftNull} it is possible to find a minimal basis of the right nullspace of the underbraced matrix, for which there then exists a linear combination $g$ such that
\begin{align}\label{eq:gNull}
    \left[\begin{array}{@{}cccc@{}}
         \mathcal{A} & \mathcal{B}_{\mathrm{d}}\mathsf{C}_\mathrm{d} & \mathcal{B} & 0\\ 
        0 & \mathcal{A}_\mathsf{d}\mathsf{C}_\mathrm{d} & 0 & 0\\ \hline
        I_{\nx} & 0 & 0 & 0\\
        0 & \mathsf{C}_\mathrm{d} & 0 & 0\\
        0 & 0 & I_{n_\mathrm{u}} & 0 \\
        0 & 0 & 0 & I_{n_\mathrm{u}(L-1)} 
    \end{array}\right]
    g=\left[\begin{array}{@{}c@{}}
        x_{k+1}\\ d_{k+1} \\ \hline x_k \\ d_k \\ \cvec{u}{k}{k+L-1}
    \end{array}\right],
\end{align}
where $\mathsf{C}_\mathrm{d}\in\mathbb{R}^{\nd\times \nu}$ is a matrix whose columns form a minimal basis of the columnspace of the controllability matrix of $(\mathcal{A}_\mathrm{d},d_0^\mathrm{m})$. By statement $(i)$ the columns of the data matrix on the left of~\eqref{eq:WFLv2ii} must span the columnspace of the bottom part of the matrix in \eqref{eq:gNull}, which proves $(ii)$.

Proof of statement $(iii)$ follows directly from $(ii)$, see \cite[Lem.~2]{DePersis2020}.
\end{proof}
The above theorem demonstrates that the controllability and rank condition posed by Willems' fundamental lemma can be relaxed to facilitate the application of \ac{DeePC} to systems of the form given by~\eqref{eq:SS_LTI_dist}. 

\subsubsection{Implications for a constant disturbance}
The following corollary states the implications of \thmref{thm:WFLv2} for a constant disturbance, as is the case in the lifted domain of
~\eqref{eq:ldSS_inno}.
\begin{cor}\label{cor:WFLdconst} Let the conditions of \thmref{thm:WFLv2} hold for a nonzero constant disturbance such that $\mathcal{A}_\mathrm{d}=I_\nd$,  then the results of \thmref{thm:WFLv2} hold with $\nu=1$.
\end{cor}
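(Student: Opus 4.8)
The plan is to evaluate the controllability matrix of $(\mathcal{A}_\mathrm{d},d_0^\mathrm{m})$ directly under the hypothesis $\mathcal{A}_\mathrm{d}=I_\nd$ and simply read off its rank. First I would recall that the relevant controllability matrix, as it appears in the rank condition of \thmref{thm:WFLv2} and in \eqref{eq:xid_Ctrb}, is
\begin{align*}
    \begin{bmatrix}
        d_0^\mathrm{m} & \mathcal{A}_\mathrm{d} d_0^\mathrm{m} & \cdots & \mathcal{A}_\mathrm{d}^{\nd-1} d_0^\mathrm{m}
    \end{bmatrix},
\end{align*}
whose rank $\nu$ is not increased by appending further powers of $\mathcal{A}_\mathrm{d}$ by virtue of the Cayley--Hamilton theorem that is already invoked in the proof of \thmref{thm:FullRowRank}.

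Next I would substitute $\mathcal{A}_\mathrm{d}=I_\nd$, so that $\mathcal{A}_\mathrm{d}^i d_0^\mathrm{m}=d_0^\mathrm{m}$ for every $i\geq 0$ and hence every column of the controllability matrix equals $d_0^\mathrm{m}$. Since the disturbance is assumed nonzero, i.e.\ $d_0^\mathrm{m}\neq 0$, this matrix has exactly one linearly independent column and therefore rank $\nu=1$. Because the corollary assumes the hypotheses of \thmref{thm:WFLv2}, the controllability condition on $(\mathcal{A},\mathcal{B})$ and the persistency-of-excitation requirement carry over unchanged, and the inherited bound $1\leq\nu<\nd$ is automatically satisfied, so the three conclusions of \thmref{thm:WFLv2} apply verbatim with $\nu=1$.

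The argument is essentially immediate, so I do not anticipate a substantive obstacle; the only points warranting care are to confirm that the \emph{nonzero} qualifier in the statement is precisely what excludes the degenerate rank-zero case $d_0^\mathrm{m}=0$, and to note that $\nd\geq 2$ is implicitly required for the strict inequality $\nu<\nd$ inherited from \thmref{thm:WFLv2} to be meaningful. With $\nu=1$ in hand, specializing \eqref{eq:WFLv2i} fixes the rank at $Ln_\mathrm{u}+\nx+1$, and \eqref{eq:WFLv2ii}--\eqref{eq:WFLv2iii} then follow directly, which completes the claim.
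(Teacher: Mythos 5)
Your proof is correct and is exactly the immediate specialization the paper intends: with $\mathcal{A}_\mathrm{d}=I_\nd$ every column of the controllability matrix of $(\mathcal{A}_\mathrm{d},d_0^\mathrm{m})$ equals $d_0^\mathrm{m}\neq 0$, so $\nu=1$ and \thmref{thm:WFLv2} applies verbatim, which is why the paper states \corref{cor:WFLdconst} without a dedicated proof and only points to an alternative affine-systems route via~\cite[Th.~1]{Martinelli2022} and~\cite[Th.~1]{Berberich2022}. Your side remarks — Cayley--Hamilton capping the rank when further powers are appended, the nonzero qualifier excluding the rank-zero case, and $\nd\geq 2$ being implicitly needed for the inherited strict inequality $\nu<\nd$ — are all accurate and consistent with the paper's surrounding discussion.
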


\ac{LTI} systems with a constant disturbance are a type of affine system, for which an alternative proof of \corref{cor:WFLdconst} can be found in the combination of~\cite[Th.~1]{Martinelli2022} and~\cite[Th.~1]{Berberich2022}. The above corollary enables direct data-driven control of the lifted system on the basis of~\eqref{eq:WFLv2iii} using \ac{DeePC} if the predicted output is additionally unique. This is the case when the past window length that is used to approximate the initial state is larger than the lag of the lifted system~\cite[Lem. 7.ii]{Li2022}. The next section presents the control problem formulation and noise mitigation strategy.

\subsection{DeePRC Formulation and Noise Mitigation}\label{sec:DeePRC_formulation}
Having lifted the \ac{LPTV} system $\mathcal{S}$ of~\eqref{eq:SS_inno} to the \ac{LTI} form $\mathcal{S}_\mathsf{L}^\mathrm{d}(k_0)$ of~\eqref{eq:ldSS_inno}, and having shown how \ac{DeePC} may accommodate exogenous disturbances in such an \ac{LTI} domain in the previous section, this section develops a data-driven repetitive control method that operates on lifted data to accommodate periodic behaviour and disturbances and mitigates noise.

\subsubsection{Closed-loop DeePC applied to a lifted system}
Use will be made of the computationally efficient \ac{CL-DeePC} framework developed in~\cite[Sec.~4]{Dinkla2024} for several reasons. Firstly, the method uses the available data relatively efficiently to suppress noise when compared to \ac{DeePC}. Secondly, the dimension of the identification task is considerably reduced w.r.t. \ac{DeePC}, which is especially significant for lifted systems. Thirdly, we note the potential of this method to obtain consistent output predictions from noisy closed-loop data,\footnote{For lifted systems this necessitates using a suitably chosen instrumental variable $\mathcal{Z}$. An optimal choice is non-trivial, so for simplicity~\eqref{eq:CL-DeePC} uses a common choice for open-loop data that is sub-optimal otherwise. For dedicated alternatives for closed-loop data see \cite{Wang2023b}.} which is a problem for regular~\ac{DeePC}~\cite{Dinkla2023}. For clarity we provide an analytically equivalent, simpler representation of the employed \ac{CL-DeePC} framework from~\cite[Sec.~4]{Dinkla2024} as

\begin{subequations}\label{eq:CL-DeePC}
\begin{empheq}[]{align}
    &\underbrace{\left[\begin{array}{@{}c@{}}
        \blkH{i}{p}{N}{\ul_j}\\\blkH{i_p}{1}{N}{\ul_j}\\\blkH{i}{p}{N}{\yl_j}
    \end{array}\right]}_{:=\mathcal{Z}}
    \mathcal{Z}^\top\!G
    \!=\!\begin{bmatrix}
        \blkH{\hat{i}}{p}{f}{\ul_j}\\\blkH{\hat{i}_p}{1}{f}{\ul_j}\\\blkH{\hat{i}}{p}{f}{\bar{\yl}_j}
    \end{bmatrix},\\
    &\blkH{i_p}{1}{N}{\yl_j}\mathcal{Z}^\top \underbrace{\begin{bmatrix}g_1 & \!\!\!\cdots & \!\!\!g_f\end{bmatrix}}_{:=G}=\blkH{\hat{i}_p}{1}{f}{\hat{\yl}_j},
  \end{empheq}
\end{subequations}
where $i$ and $\hat{i}$ are starting indices of the relevant data matrices, $p$ is a window length of past data, $f$ is the prediction window length, $i_p:=i+p$, $\hat{i}_p:=\hat{i}+p$, $\hat{(\cdot)}$ indicates predictions, and $\bar{(\cdot)}$ indicates that the data is composed in part of predictions. With this notation, the first future sample is found at $\hat{i}_p$. Furthermore, $G\in\mathbb{R}^{((p+1)Pr+pPl)\times f}$ is a collection of $f$ vectors that are found in \ac{DeePC} with instrumental variables~\cite{vanWingerden2022}. This paper uses $\mathcal{Z}$ as an instrumental variable to mitigate the effects of noise whilst preserving the rank of the matrix pre-multiplying $G$. Subsequent columns of~\eqref{eq:CL-DeePC} correspond to \ac{DeePC} with an instrumental variable matrix and a prediction window length of one, applied to the same past data matrix to find trajectories that are each subsequently shifted one (lifted) sample into the future.

\subsubsection{Optimal Control Problem Formulation} The optimal control problem formulation that is used here is
\begin{subequations}\label{eq:OCP}
\begin{align}
    \min_{\cvec{\ul}{\hat{i}_p}{\hat{i}_p+f-1}}
    &||\cvec{\hat{\yl}}{\hat{i}_p}{\hat{i}_p+f-1}||_Q^2 + ||\cvec{\ul}{\hat{i}_p}{\hat{i}_p+f-1}||_R^2\label{eq:OCP_cost}\\
\text{s.t. }&\text{\eqref{eq:CL-DeePC}, }\\
&\ul_j\in\mathcal{U},\;\hat{\yl}_j\in\mathcal{Y},\; \forall j\in[\hat{i}_p,\;\hat{i}_p+f-1],
\end{align}
\end{subequations}
in which, $R$ and $Q$ are respectively positive(semi)-definite weighting matrices, and $\mathcal{U}$ and $\mathcal{Y}$ are allowed sets constraining the inputs and outputs.

\subsubsection{Receding horizon implementation} 
With regards to the receding horizon implementation, consider implementing either the entire sequence of inputs corresponding to the first computed lifted input $\ul_{\hat{i}_p}$, or the first input of this sequence $u_{k_0+\hat{i}_pP}$. A significant advantage of the latter method is the availability of more frequent feedback by which to improve performance. However, note that for data-driven control applications that seek to learn \ac{LTI} behaviour (e.g. by updating RQ data factorizations at each time step as in~\cite{Hallouzi2010}) the latter method also has an important drawback: since the lifted \ac{LTI} system varies periodically with the starting point ${\mathcal{S}_\mathsf{L}^\mathrm{d}(k_0)=\mathcal{S}_\mathsf{L}^\mathrm{d}(k_0+P)}$, such applications would need to learn the behaviour of $P$ different lifted \ac{LTI} systems.
\section{Simulation Results}\label{sec:Simulation}
Having presented the \ac{DeePRC} framework, this section will demonstrate the superior performance it can achieve for a periodic system with a periodic disturbance.

\subsection{Simulated LPTV System}
The simulated periodic system was obtained from~\cite{vanWingerden2008} and constitutes a linear parameter-varying system with a periodic scheduling parameter $\mu_k=\cos(\tfrac{2\pi}{P}k)$, with period $P=20$. The periodic matrices from \eqref{eq:SS_inno} are defined by
{\footnotesize%
\begin{align*}
    \left[\!\!\begin{array}{c|c}
        A^{(1)} & A^{(2)}
    \end{array}\!\!\right] &= \left[\begin{array}{ccc|ccc}
           0 & 0.9 & 0.2 & 0.6 & 0.5 & 0.5 \\
        -0.9 & 0.5 & 0 & 0.5 & 0.6 & 0 \\
        -0.2 & 0 & 0.2 & -0.5 & 0 & 0.6
    \end{array}\right], \\
    \left[\!\!\begin{array}{c|c}
        B^{(1)} & B^{(2)}
    \end{array}\!\!\right] &= \left[\begin{array}{c|c}
        1 & 0.4 \\
        1 & 0.2 \\
        1 & 0.12
    \end{array}\right]\!\!,\;
    \left[\!\!\begin{array}{c|c}
        D^{(1)} & D^{(2)}
    \end{array}\!\!\right] = \left[\begin{array}{c|c}
        0.1 & 0.2 \\
        0.2 & 0.1
    \end{array}\right]\!, \\
    \left[\!\!\begin{array}{c|c}
        C^{(1)} & C^{(2)}
    \end{array}\!\!\right] &= \left[\begin{array}{ccc|ccc}
        0.2 & 1 & 0.5 & 0.2 & 0.1 & 1 \\
        0.2 & 0.1 & 1 & 0.3 & 0.4 & 0.8
    \end{array}\right],\\
    \left[\!\!\begin{array}{c|c}
        K^{(1)} & K^{(2)}
    \end{array}\!\!\right] &= \left[\begin{array}{cc|cc}
        0.0130 & 0.0225 & 0 & 0 \\
        0.0089 & 0.0060 & 0 & 0 \\
        0.0002 & -0.0010 & 0 & 0
    \end{array}\right],
\end{align*}}%
with $A_k = A^{(1)}+\mu_k A^{(2)}$, and likewise for $B_k$, $C_k$, $D_k$, and $K_k$. An unknown input disturbance will be applied such that $F_k=B_k$, and $G_k=D_k$. The periodic disturbance is given by $d_k=\sin(\tfrac{2\pi}{P}k)$. Use is made of zero-mean white innovation noise $e_k$ with a variance of $0.05$. In the lifted domain represented by \eqref{eq:ldSS_inno} this system has controllable $(A,B)$ and is observable.

\subsection{Controller Settings}
Two different controllers are simulated. One controller makes use of the \ac{DeePRC} framework with optimal control problem formulation given by \eqref{eq:OCP} with $p=1$, $f=2$ periods, $Q=100$, $R=1$. This controller computes a lifted future input sequence and implements the first sample ${u_k\in\mathbb{R}^r}$ thereof. A second controller uses \ac{CL-DeePC} and operates fully in the non-lifted domain, likewise implementing the first sample $u_k$ that it computes. The optimal control problem solved by the second controller is comparable to that solved by \ac{DeePRC}. Furthermore $Q$ and $R$ are the same for the second controller, as are the effective window lengths in terms of the non-lifted domain. Both controllers implement constraints of the form $|u_k|\leq 10$ and $|y_k|\leq 20$ and are initialized with $1000$ periods of open-loop data where $u_k$ is zero-mean white noise with a variance of $1$.

\subsection{Simulation Results}
With the above simulation model and controller settings, the obtained results are shown in Fig. \ref{fig:sim_results1}. The controllers are enabled at the grey vertical line when the open-loop data collection ends. The figure clearly demonstrates that the \ac{DeePRC} controller, which operates in the lifted domain, outperforms the \ac{CL-DeePC} controller in attenuating the effect of the disturbance to regulate the output channels to zero. This is because the \ac{DeePRC} controller effectively operates on \ac{LTI} data and is therefore better able to form an implicit internal disturbance model, as explained in Section~\ref{sec:DistWFL}. Furthermore, notice that after initialization the \ac{CL-DeePC} controller violates the output constraints. This is possible because the constraints are formulated only for future samples and are, if necessary to ensure feasibility, relaxed. Moreover, there is considerable mismatch for the \ac{CL-DeePC} controller between the data-driven output predictor that it employs and the true system. This contributes to a deterioration of the performance compared also to the case where no control action is applied.
\begin{figure}[t!]
      \centering
      \includegraphics[width=\columnwidth]{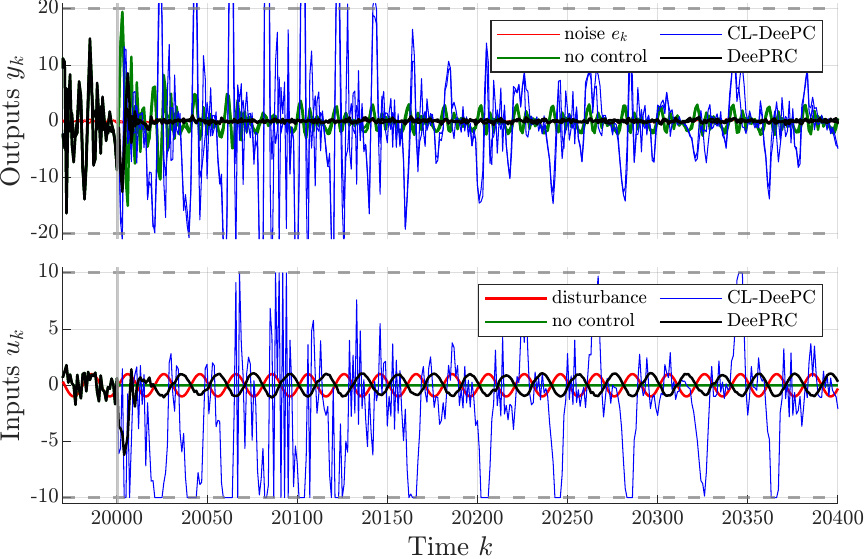}
      \caption{Performance of \ac{DeePRC} and \ac{CL-DeePC} controllers using respectively data from the lifted \ac{LTI} and the non-lifted periodic domain under the influence of a periodic disturbance and noise. \ac{DeePRC} effectively compensates the input disturbance, whilst \ac{CL-DeePC} performs worse here w.r.t. the case without control. Dashed grey lines indicate contraints.}
      \label{fig:sim_results1}
\end{figure}

The obtained iteration cost, as specified by
\begin{align}
    \mathcal{J}(j)=||\yl_j||_Q^2 + ||\ul_j||_R^2\label{eq:iter_cost}    
\end{align}
can be calculated for each iteration index $j$. The results are shown in Fig.~\ref{fig:sim_results2} for both the setting with noise shown in Fig.~\ref{fig:sim_results1} and in the same setting, but without noise. In the noiseless case, the \ac{DeePRC} iteration cost demonstrates convergence that is, at least initially, quite fast. Moreover, the iteration cost of \ac{DeePRC} is considerably lower than would be the case without control. This is not the case for the \ac{CL-DeePC} controller, which only appears to achieve a somewhat better iteration cost than no control would in the absence of noise and after a considerable number of iterations.
\begin{figure}[t!]
      \centering
      \includegraphics[width=\columnwidth]{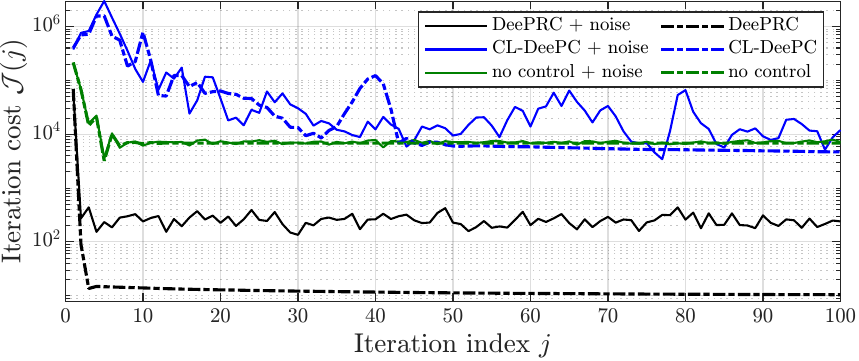}
      \caption{Obtained iteration cost as specified by~\eqref{eq:iter_cost} of the \ac{DeePRC} and \ac{CL-DeePC} controllers under conditions with and without noise. The iteration cost of \ac{DeePRC} is lower than that of \ac{CL-DeePC} and illustrates faster convergence.}
      \label{fig:sim_results2}
\end{figure}
\section{Conclusions and Future Work}\label{sec:Conclusion}
A new control framework is presented that is able to address both dynamics and disturbances of a known period in the presence of noise. Moreover, it is shown under what conditions Willems' fundamental lemma can accommodate autonomous, uncontrollable disturbance dynamics that arise from the application of the internal model principle. In particular, it is shown how despite a loss of the generally assumed controllability, a constant disturbance may still be accommodated by \ac{DeePC} formulations. Simulation results indicate superior performance of the \ac{DeePRC} controller compared to a \ac{CL-DeePC} controller that respectively use data from the lifted \ac{LTI} and periodic system domains. Future work considers the effect of an uncertain period as well as periodic data differencing to remove the effect of the disturbance from the data.
\addtolength{\textheight}{-0.75cm}   
\bibliographystyle{IEEEtran}
\bibliography{main}
\end{document}